\definecolor{MyLinkColor}{rgb}{0,0,0.4}
\newcommand{\R}{{\mathbb R}}
\newcommand{\ov}{\overline}
\newcommand{\ee}{\mathrm e}
\newcommand{\dx}{\mathrm d}
\newtheorem{thm}{Theorem}[section]
\newtheorem{prop}[thm]{Proposition}
\numberwithin{equation}{section}   
\title[Exact solutions for the ACC]{Explicit and exact solutions concerning the Antarctic Circumpolar Current with variable density in spherical coordinates}
\author[C.~I.~Martin]{Calin Iulian Martin$^*$}
\thanks{$^*$Corresponding author} 
\address{Faculty of Mathematics, University of Vienna, Oskar Morgenstern Platz 1, 1090 Vienna, Austria \\
	E-mail address: \href{mailto:calin.martin@univie.ac.at}{\textsf{calin.martin@univie.ac.at}}}
\author[R.~Quirchmayr]{Ronald Quirchmayr}
\address{Department of Mathematics, KTH Royal Institute of Technology, Lindstedtsv\"agen 25, 100 44 Stockholm, Sweden \\
	E-mail address: \href{mailto:ronaldq@kth.se}{\textsf{ronaldq@kth.se}}}
\begin{document}

\maketitle	

\begin{abstract}
\noindent
We use spherical coordinates to devise a new exact solution to the governing equations of geophysical fluid dynamics for an inviscid and incompressible fluid with a general 
density distribution and subjected to forcing terms.
The latter are of paramount importance for the modeling of realistic flows-that is, flows that are observed in some averaged sense in the ocean.
Owing to the employment of spherical coordinates we do not need to resort to approximations (e.g. of $f$- and $\beta$-plane type) that simplify the geometry in the governing equations.
Our explicit solution represents a steady purely-azimuthal stratified flow with a free surface, that---thanks to the inclusion of forcing terms and the consideration of the Earth's 
geometry via spherical coordinates---makes it  suitable for describing the Antarctic Circumpolar Current and enables an in-depth analysis of the structure of this flow.
In line with the latter aspect, we employ functional analytical techniques to prove that the free surface distortion is defined in a unique and implicit way by means 
of the pressure applied at the free surface. We conclude our discussion by setting out relations between the monotonicity of the surface pressure and the monotonicity of 
the surface distortion that concur with the physical expectations.

\vspace{1em}
\noindent
{\bfseries Keywords}: Azimuthal flows, Antarctic Circumpolar Current, spherical coordinates, Coriolis force, exact explicit and implicit solutions\\
{\bfseries Mathematics Subject Classification}: Primary: 35Q31, 35Q35. Secondary: 35Q86.
\end{abstract}

\section{Introduction}
\noindent
To a large extent, the analytical studies concerning the fluid flows were (are) concentrated on well-posedness or regularity type results. Our contribution here aligns with a recent
tendency---initiated by Constantin \cite{CGeoResLett, CoGeoPhys, CoPhysOc13, CoPhysOc14} and Constantin and Johnson---\cite{CJ, CJaz, CJazAcc, CJPoF} of devising explicit and 
exact solutions to the governing equations of geophysical fluid dynamics (GFD) that describe surface waves and their interactions with the underlying currents which are ubiquitous 
in Earth's ocean basins. A selection of further works in this direction is \cite{ChuEsc, ChuIonYang, HenEjmb, HsuMarNA, Ion, MarNonl, AMJPhA, AMApplAna, MatMatJnmp}.

More precisely we are concerned here with providing an analytical solution to the governing equations for water flows (with their boundary conditions) which represents
a flow that moves completely around the Earth on a circular path. This
solution describes an incompressible inviscid stratified steady flow moving purely in the azimuthal direction, 
i.e.~the velocity profile and the pressure is described below and up to the free surface as a function of depth and the angle of latitude.
As such, this solution
is suitable for the depiction of the Antarctic Circumpolar Current (ACC), without doubt the most significant current in Earth's oceans. Indeed, ACC is the only current that fully surrounds
the polar axis. Its massiveness is reflected by the great area it occupies---flowing eastwards through the southern regions of Atlantic, Indian and Pacific Oceans 
along $23.000$ km, extending in places over $2000$ km in width, cf. \cite{Fir, Ivch, Olb, Rin}
---and as such, by the huge volumes of water it transports estimated to be between 165 million and 182 million cubic meters of water every second, cf. \cite{dtwcc}, which represents more 
than 100 times the flow of all the rivers on Earth. Recent mathematical studies investigating ACC's properties can be found in 
\cite{CJazAcc, HazMary, HazDCDS, HsuMarAcc, Mary, Quir}. 

The new aspect of our investigation---compared to the existing mathematical
literature on ACC---is the presence of density stratification. This accommodates observed 
sharp changes in water density known as fronts, cf. \cite{Phillips}. The two main fronts of the ACC are the Subantarctic Front to the north and the Polar Front further south, 
see Figure \ref{fig: fronts}.
\begin{figure}[h]
\begin{center}
\includegraphics*[width=0.75\textwidth]{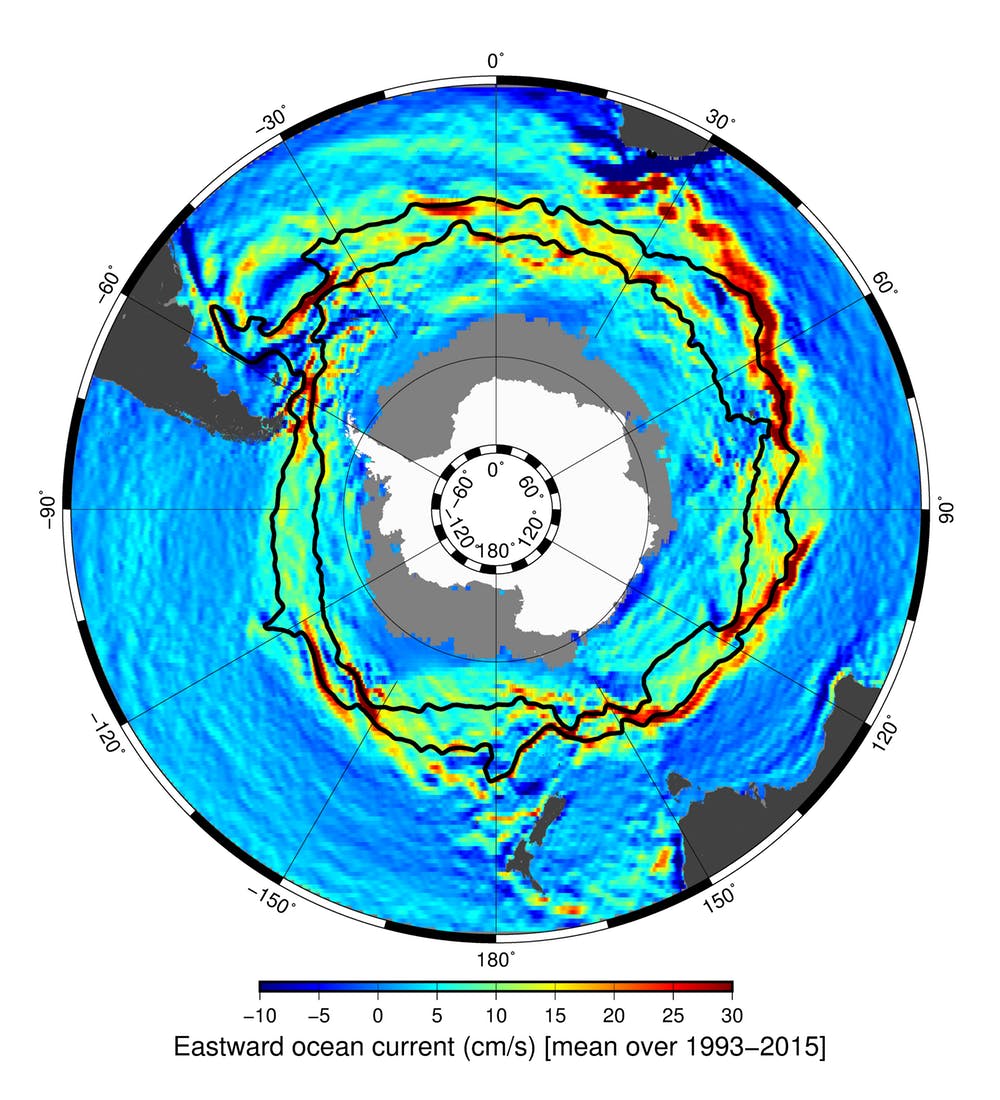}
\end{center}
\caption{Variations in density and the mean eastward current speed. The black lines in the image represent the fronts of ACC. Image credit: Hellen Phillips (Senior Research Fellow, Institute for Marine and Antarctic Studies, 
University of Tasmania), Benoit Legresy (CSIRO) and Nathan Bindoff (Professor of Physical Oceanography, Institute for Marine and Antarctic Studies, University of Tasmania)} 
\label{fig: fronts}
\end{figure}
We remark that the recent study \cite{Kob} clearly indicates the role of a thorough knowledge of the impact of enhanced stratification in the Southern Ocean on the mechanisms behind the glacial-interglacial
ocean carbon cycle variations.

Apart from the two aspects pertaining to density variations in ACC we would like to remark that stratification is of significant importance in the understanding of GFD in general. Indeed, 
large scale oceanic processes exhibit and experience pronounced density variations, most commonly due to fluctuations in the fluid temperature or salinity,  
cf. \cite{CGeoResLett,CJ,CJPoF,FedBr,KesMcP,McC}. Therefore, it is of high significance to achieve a comprehensive analytical understanding of stratified flows which could potentially
serve more applied endeavors. However, allowing for variable density in water flows greatly complicates the analysis of an already challenging problem. Indeed, we note that even in 
the scenario of two-dimensional water flows the literature manifests a pronounced scarcity: recent steps ahead being
undertaken in \cite{BasDCDS, CWW, CI, CJ, cim, CICpam, EMM, GeyQui, HM1, HM2, HenMarJDE, HenMarARMA, Wal1, Whe}.

Exact and/or explicit solutions describing geophysical flows constitute a special and very rare event, reason being the intricacy of the involved mathematical problem. 
Once they are available they secure ways to produce more physically realistic and observed flows, by means of asymptotic \cite{CJJPhysOc19}, or multiple scale methods \cite{CJ}.
Faithfull to this realization,
we derive here explicit solutions for the pressure and for the velocity field beneath the free surface: the latter is itself given in an implicit form that relates it 
to the pressure at the surface. Building on the approaches by Constantin and Johnson \cite{CJaz, CJazAcc}, we use spherical coordinates to devise purely-azimuthal, 
depth-dependent varying flows, that verify the GFD equations and their boundary conditions. 
Moreover, our solutions exhibit a general stratification (that varies with depth and latitude) and forcing terms believed to be responsible for the dynamical balance 
of the ACC, cf. \cite{CJazAcc, MarshForcing}. Finally we point to \cite{Marsh16, Danabas} for the relevance of stratification to maintain the equilibrium of ACC:
baroclinic instability (arising from stratification) generates eddy-induced cells (acting to flatten the isopycnals) that counterbalance the wind-driven Ekman cell 
(acting to steepen isopycnals).

The organization of the paper is the following: After introducing in Section \ref{presentation} the governing equations (in spherical coordinates) and their boundary conditions for geophysical flows, we 
derive in Section \ref{explicit_sol} explicit solutions that render the velocity in the azimuthal direction and the corresponding pressure function. The implicit (exact) solution that 
describes the free surface is achieved in Section \ref{implicit_sol} by means of a functional analytic argument. Relations between the monotonicity of the free surface and the surface pressure are presented in the last section of the paper.

\section{Physical problem and governing equations}\label{presentation}
\noindent
In this section we provide the governing equations for geophysical flows written in spherical coordinates to accommodate the shape of the Earth, together with the boundary conditions for the free surface and a rigid bed.

We will work in a system of right handed coordinates $(r, \theta, \varphi)$ where $r$ denotes the distance to the centre of the sphere, $\theta\in [0,\pi]$  is the polar angle
(the convention being that $\pi/2-\theta$ is the angle of latitude) $\varphi\in [0, 2\pi]$ is the azimuthal angle (the angle of longitude). While in this coordinate system
the North and South poles are located at $\theta=0,\pi$, respectively, the Equator sits on $\theta=\pi/2$, the Antarctic Circumpolar Current is situated at $\theta=3\pi/4$.
The unit vectors in this system are 
\begin{align*}
{\bf e}_r&=(\sin\theta\cos\varphi,\sin\theta\sin\varphi, \cos\theta),\\
{\bf e}_{\theta}&=(\cos\theta\cos\varphi,\cos\theta\sin\varphi, -\sin\theta),\\
{\bf e}_{\varphi}&=(-\sin\varphi,\cos\varphi,0),
\end{align*}
with $\bf{e}_{\varphi}$ pointing from West to East and $\bf{e}_{\theta}$ from North to South, cf. Figure \ref{fig: sphericalsyst}.
\begin{figure}[h]
\begin{center}
\includegraphics*[width=0.75\textwidth]{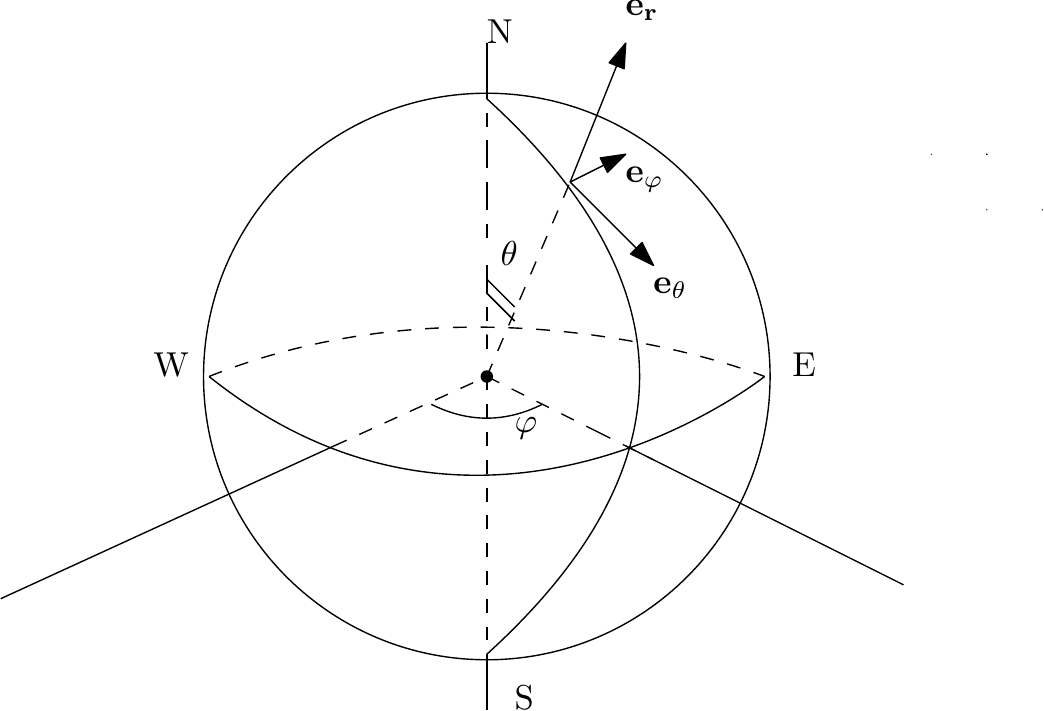}
\end{center}
\caption{The spherical coordinate system: $\theta$ is the polar angle, $\varphi$ is the azimuthal angle (the angle of longitude) and $r$ represents the distance to the origin.} 
\label{fig: sphericalsyst}
\end{figure}

Throughout this paper we make the following simplifying assumption on the location of the Antarctic Circumpolar Current. We assume that the angle of latitude $\theta$ satisfies
\begin{equation} \label{latitude_ACC}
\frac{3\pi}{4} - \frac{\pi}{18}\leq\theta \leq \frac{3\pi}{4} + \frac{\pi}{18}.
\end{equation}

We are guided in our study by the observations made in Maslowe \cite{Mas} asserting that the Reynolds number is, in general, extremly large for this oceanic flows. 
Accordingly we will consider incompressible and inviscid flows. However, our analysis incorporates a general density distribution $\rho=\rho(r,\theta)$.

Denoting with 
\begin{equation*}
{\bf u}=u{\bf e}_r+v{\bf e}_{\theta}+w{\bf e}_{\varphi} 
\end{equation*}
the velocity field we have that, cf. \cite{CJaz}, the governing equations in the $(r, \theta, \varphi)$ 
coordinate system are the Euler's equations,
 \begin{equation}\label{Eulereq}
 \begin{array}{rcl}
  u_t+uu_r +\frac{v}{r}u_{\theta}+\frac{w}{r\sin\theta}u_{\varphi}-\frac{1}{r}(v^2+w^2) & =& -\frac{1}{\rho}p_r +F_r\\
  & &\\
    v_t+uv_r +\frac{v}{r}v_{\theta}+\frac{w}{r\sin\theta}v_{\varphi}+\frac{1}{r}(uv-w^2\cos\theta) & =& -\frac{1}{\rho}\frac{1}{r}p_{\theta} +F_{\theta}\\
    & &\\
    w_t+uw_r +\frac{v}{r}w_{\theta}+\frac{w}{r\sin\theta}w_{\varphi}+\frac{1}{r}(uw+vw\cot\theta)&=& -\frac{1}{\rho}\frac{1}{r\sin\theta}p_{\varphi} +F_{\varphi},
  \end{array}
\end{equation}
(where $p(r,\theta,\varphi)$ is denotes the pressure in the fluid and $(F_r,F_{\theta}, F_{\varphi})$ is the body-force vector) and the equation of mass conservation
\begin{equation}\label{masscons}
 \frac{1}{r^2}\frac{\partial}{\partial r}(\rho r^2 u)+\frac{1}{r\sin\theta}\frac{\partial}{\partial\theta}(\rho v\sin\theta)+\frac{1}{r\sin\theta}\frac{ \partial (\rho w)}{\partial  \varphi}=0.
\end{equation}

To capture effects of Earth's rotation we associate $({\bf e}_r, {\bf e}_{\theta}, {\bf e}_{\varphi})$ with a fixed point on the sphere which is rotating about its polar axis.
Consequently, we consider on the left of \eqref{Eulereq} the Coriolis term 
\begin{equation*}
2{\bf \Omega} \times {\bf u},
\end{equation*}
where 
\begin{equation*}
{\bf \Omega} =\varOmega({\bf e}_r\cos\theta-{\bf e}_{\theta}\sin\theta),
\end{equation*}
(with $\varOmega\approx 7.29\times 10^{-5}$ rad $s^{-1}$ being the constant rotation speed of Earth) and the centripetal acceleration
\begin{equation*}
{\bf \Omega}\times ({\bf \Omega}\times {\bf r}),
\end{equation*}
where ${\bf r}=r{\bf e}_r$. The latter two quantities add up to
\begin{align*}
2&\varOmega\big[-w(\sin\theta){\bf e}_r -w(\cos\theta){\bf e}_{\theta}+(u\sin\theta+v\cos\theta){\bf e}_{\varphi}\big] \\
&-r\varOmega^2\big[(\sin^2\theta) {\bf e}_r+(\sin\theta\cos\theta){\bf e}_{\theta}\big]
\end{align*}
with respect to the $({\bf e}_r, {\bf e}_{\theta}, {\bf e}_{\varphi})$ basis.

Concerning the body-force vector it holds that (cf.~\cite{CJazAcc}) the gravitational constant acts in $r$-direction, a general body-force $G(r,\theta)$ acts in $\theta$-direction, no force is acting in $\varphi$-direction. That is, the body-force vector is 
\begin{equation*}
(-g,G(r,\theta), 0).
\end{equation*}

The equations of motion are filled up by the boundary conditions as follows.
On the free surface $r=R+h(\theta,\varphi)$ we have the dynamic boundary condition
\begin{equation}\label{surfacepressure}
 p=P(\theta, \varphi),
\end{equation}
and the kinematic boundary condition 
\begin{equation}\label{kin_surf}
u=\frac{v}{r}\frac{\partial h}{\partial\theta}+\frac{w}{r\sin\theta}\frac{\partial h}{\partial\varphi}.
\end{equation}
We ask that the bottom of the ocean, described by $r=d(\theta,\varphi)$, is an impenetrable boundary by demanding 
\begin{equation}\label{kin_bed}
 u=\frac{v}{r}\frac{\partial d}{\partial\theta}+\frac{w}{r\sin\theta}\frac{\partial d}{\partial\varphi}
\end{equation}
to hold there.

The solutions that we are seeking for, represent a steady flow that propagates only in the azimuthal direction with no variations in this direction. Thus, the velocity field is characterized by 
$u=v=0$ and $w=w(r,\theta)$. Moreover, $p=p(r,\theta)$, $h=h(\theta)$, $d=d(\theta)$.
Consequently, while the two kinematic boundary conditions \eqref{kin_surf} and \eqref{kin_bed} and the equation of mass conservation \eqref{masscons}
are trivially satisfied, the Euler equations governing such special 
flows read
\begin{equation}\label{specialflow}
 \left\{\begin{array}{rcl}
         -\frac{w^2}{r}-2\varOmega w\sin\theta-r\varOmega^2\sin^2 \theta & =& -\frac{1}{\rho}p_r-g,\\
         & &\\
          -\frac{w^2}{r}\cot\theta- 2\varOmega w\cos\theta -r\varOmega^2\sin\theta\cos\theta & = & -\frac{1}{\rho r}p_{\theta}+G(r,\theta),\\
          &  &\\
0& =& -\frac{1}{\rho}\frac{1}{r\sin\theta}p_{\varphi}.
          \end{array}\right.
\end{equation}

\section{Exact explicit and implicit solutions}\label{solutions}
\noindent
\subsection{Explicit solutions for azimuthal velocity and pressure}\label{explicit_sol}
The first outcomes of this section concern the velocity $w$ and the pressure $p$. These two quantities can be determined by means of explicit formulas. To derive the latter we 
use the third equation in \eqref{specialflow} to infer that $p$ is independent of $\varphi$, i.e.~$p=p(r,\theta)$. 
System \eqref{specialflow} can now be simplified as 
\begin{align}
\rho\frac{(w+\Omega r\sin\theta)^2}{r}&=p_r+g\rho  \\
\rho r\cot\theta \frac{(w+\Omega r\sin\theta)^2}{r}&=p_{\theta}-\rho r G(r,\theta).
\end{align}
Let us denote
\begin{equation} \label{Z}
Z = Z(r,\theta) \coloneqq \frac{(w+\Omega r\sin\theta)^2}{r}
\end{equation}
to rewrite the above system as
\begin{align}
\rho Z&=p_r+g\rho \label{system_Z_1} \\
(\rho r\cot\theta) Z&=p_{\theta}-\rho r G(r,\theta) \label{system_Z_2}.
\end{align}
The elimination of the pressure $p$ from the system \eqref{system_Z_1}--\eqref{system_Z_2} leads to 
\begin{equation*}
\left(\rho  Z\right)_{\theta}-\left(\rho r\cot\theta  Z\right)_r =g\rho_{\theta}+\frac{\partial}{\partial r}\big(\rho r G(r,\theta)\big)
\end{equation*}
which can be rewritten as 
\begin{equation} \label{eq_U'}
-r\cos\theta\left(\rho r Z\right)_r +\sin\theta \left(\rho r Z\right)_{\theta}=  (r \sin\theta) \left[g\rho_{\theta}(r,\theta)+
\frac{\partial}{\partial r}\big(\rho r G(r,\theta)\big)\right].
\end{equation}
We further set 
\begin{equation} \label{U}
U \coloneqq \rho r Z
\end{equation}
and appeal to the method of characteristics to solve \eqref{eq_U'}; i.e., we set out to solve
\begin{equation}\label{eq_U}
-(r\cos\theta) U_r+(\sin\theta) U_{\theta}=   (r \sin\theta) \left[g\rho_{\theta}(r,\theta)+
\frac{\partial}{\partial r}\big(\rho r G(r,\theta)\big)\right]. 
\end{equation}
Along these lines we seek curves 
$s\rightarrow (r(s),\theta(s))$ satisfying 
\begin{gather}\label{charac_eq}
\dot{r}(s)=-r(s)\cos\theta(s), \qquad \dot\theta(s)=\sin\theta(s).
\end{gather}
Note that any $(r(s),\theta(s))$ that satisfies the previous system also obeys the equation
\begin{equation}\label{rthetaprop}
\frac{\dx}{\dx s}\Big(r(s)\sin\theta(s)\Big)=0\quad{\rm for}\,\,{\rm all}\quad s\in \mathbb{R}
\end{equation}
The choice \eqref{charac_eq} transforms equation \eqref{eq_U} into
\begin{align} \label{paramformU}
\begin{aligned}
\frac{\dx}{\dx s}\Big(&U\big(r(s),\theta(s)\big)\Big)= \\  
\big(&r(s) \sin\theta(s)\big) \left[g\rho_{\theta}\big(r(s),\theta(s)\big)+
\frac{\partial}{\partial r}\Big(\rho\big(r(s),\theta(s)\big) r(s) G\big(r(s),\theta(s)\big)\Big)\right],
\end{aligned}
\end{align}
whose resolution depends upon finding suitable solutions to the characteristic equations \eqref{charac_eq}.
 The integration of \eqref{charac_eq} yields the general solution  $(\tilde{r}(s),\tilde{\theta}(s))$ given by
\begin{gather}
\tilde{r}(s)=c_1 \ee^s+c_2 \ee^{-s},\quad c_1,c_2\in\mathbb{R}\\
\tilde{\theta}(s)=\arccos\left(\frac{1-c\ee^{2s}}{1+c \ee^{2s}}\right),\quad c\geq 0,
\end{gather}
where $c,c_1,c_2$ are constants of integration. To find $U$ as a function of $r$ and $\theta$ we proceed as follows:
for given $(r,\theta)$, we search for an $s_0$ such that 
\begin{equation}\label{initial_values}
\tilde{\theta}(s_0)=\theta, \quad\tilde{r}(s_0)=r.
\end{equation}
 We choose now the constant $c$ such that $\tilde{\theta}(0)=\frac{\pi}{2}$. This gives $c=1$, that is 
\begin{equation}
\tilde{\theta}(s)=\arccos\left(\frac{1-\ee^{2s}}{1+ \ee^{2s}}\right),
\end{equation}
From the property \eqref{rthetaprop} we see that 
\begin{equation*}
\tilde{r}(0) \sin\big(\tilde{\theta}(0)\big)=  \tilde{r}(s_0)\sin\big(\tilde{\theta}(s_0)\big)=  r\sin\theta,
\end{equation*} 
which, after using $\tilde{\theta}(0)=\frac{\pi}{2}$, gives
$\tilde{r}(0)=r\sin\theta$.

It is easy to see that
\begin{equation*}
s_0=\frac{1}{2}\ln\frac{1-\cos\theta}{1+\cos\theta}\eqqcolon f(\theta)
\end{equation*}
 is the unique 
element satisfying the first equation in \eqref{initial_values}. To determine $c_1$ and $c_2$ we solve
\begin{gather}
\tilde{r}(f(\theta))=r  \label{rftheta} \\  \tilde{r}(0)=r\sin\theta \label{rf0}
\end{gather}
that is equivalent to 
\begin{gather}
c_1+c_2+(c_2-c_1)\cos\theta=r\sin\theta \nonumber \\ c_1+c_2=r\sin\theta, \nonumber
\end{gather}
whose unique solution (for $\theta\neq \pi/2$) is 
\begin{equation*}
c_1=c_2=\frac{r\sin\theta}{2}.
\end{equation*}
Let us denote with
 $(\ov{r}(s),\ov{\theta}(s))$ the special characteristic solution to \eqref{charac_eq} that satisfies \eqref{initial_values}.
That is,
\begin{equation*}
\ov{r}(s)=\frac{r\sin\theta}{2}(\ee^s+\ee^{-s}),\quad \ov{\theta}(s)=\arccos\left(\frac{1-\ee^{2s}}{1+ \ee^{2s}}\right).
\end{equation*}
Inserting $(\ov{r}(s),\ov{\theta}(s))$ in the formula \eqref{paramformU},  integrating afterwards with respect to $s$ from 
$0$ to $f(\theta)$,  taking into account \eqref{rftheta}--\eqref{rf0}, and setting 
\begin{equation} \label{H}
H(r,\theta) \coloneqq r\rho(r,\theta)G(r,\theta),
\end{equation}
we obtain
\begin{equation} \label{U_formula}
U(r,\theta)-U(r\sin\theta,\pi/2)=r\sin\theta\int_0^{f(\theta)}\left[g\rho_{\theta}(\ov{r}(s),\ov{\theta}(s)) + H_r(\ov{r}(s),\ov{\theta}(s))    \right] \, \dx s,
\end{equation}
where we have used that $\frac{\dx}{\dx s}\big(\ov{r}(s)\sin\ov{\theta}(s)\big)=0$ for all $s$, which implies that 
\begin{equation*}
\ov{r}(s)\sin\left(\ov{\theta}(s)\right)=\ov{r}(0)\sin\left(\ov{\theta}(0)\right)=r\sin\theta\,\,{\rm for}\,\,{\rm all}\,\, s.
\end{equation*}
From the definitions of the functions $Z$ and $U$, cf.~\eqref{Z} and \eqref{U}, respectively, we finally obtain that the azimuthal velocity $w$ is given by 
the formula
\begin{align}\label{azim_vel}
\begin{aligned}
w(r,\theta)=&-\Omega r\sin\theta  \\
&+\sqrt{ \frac{ F(r\sin\theta)+ r\sin\theta\int_0^{f(\theta)}\left[g\rho_{\theta}(\ov{r}(s),\ov{\theta}(s)) +H_r(\ov{r}(s),\ov{\theta}(s))    \right] \, \dx s    }{\rho(r,\theta) }    },
\end{aligned}
\end{align}
where $t \mapsto F(t)$ denotes a given arbitrary smooth function.

To determine the pressure $p$ we first infer from \eqref{Z}, \eqref{system_Z_1},  \eqref{system_Z_2}, \eqref{U}, \eqref{H} and \eqref{U_formula} that
\begin{align}
p_r(r,\theta) &= -g \rho(r,\theta) +\frac{1}{r} \bigg(
F( r \sin \theta) + r \sin \theta \int^{f(\theta)}_0 [g\rho_\theta (\ov r,\ov \theta) + H_r(\ov r,\ov \theta)  ] \,  \dx s \bigg), 
\label{p_r} \\
p_\theta(r,\theta) &= H(r,\theta) + \cot \theta \,
\bigg[ F( r \sin \theta) + r \sin \theta \int^{f(\theta)}_0 [g\rho_\theta (\ov r,\ov \theta) + H_r(\ov r,\ov \theta)  ] \,  \dx s  \bigg]. \label{p_theta}
\end{align}
Integrating both sides in \eqref{p_r} with respect to $r$ yields that 
\begin{equation} \label{p_1}
p(r,\theta) = C(\theta) - g \int^r_a \rho(\tilde r,\theta) \,\dx \tilde r 
+ \int^{r \sin \theta}_{a  \sin \theta} \bigg[\frac{F(y)}{y} +  \mathcal F(y,\theta) \bigg] \,\dx y,
\end{equation} 
where $a\in\R$ is a constant, the map $\theta\mapsto C(\theta)$ will be established in the sequel, and $\mathcal F$ is given by
\begin{equation} \label{mathcal_F}
\mathcal F(y,\theta) \coloneqq \int^{f(\theta)}_0 \bigg[ g \rho_\theta \bigg(y \, \frac{\ee^s+\ee^{-s}}{2}, \ov\theta(s) \bigg) + H_r \bigg(y \, \frac{\ee^s+\ee^{-s}}{2}, \ov\theta(s)   \bigg)  \bigg]   \, \dx s.
\end{equation}
To determine $C$, we differentiate \eqref{p_1} with respect to $\theta$, which, by an application of the chain rule, yields that
\begin{align}
	\begin{aligned} \label{p_theta_2}
p_\theta(r,\theta) = 
C'(\theta) 
+   H(r,\theta) - H(a,\theta) 
+ \cot \theta \, \big[ F(r  \sin \theta) -F(a  \sin \theta)\big] .
\end{aligned}
\end{align}
Here we have exploited that for every fixed $y$,
\begin{align*}
 \mathcal F_\theta (y,\theta) &= f'(\theta)
  \bigg[ g \rho_\theta \bigg(y \, \frac{\ee^s+\ee^{-s}}{2}, \ov\theta(s) \bigg) + H_r \bigg(y \, \frac{\ee^s+\ee^{-s}}{2}, \ov\theta(s)   \bigg)  \bigg] \Bigg|_{s=f(\theta)}  \\
  &= \frac{1}{\sin \theta} \bigg[  g \rho_\theta \bigg( \frac{y}{\sin\theta}, \theta(s) \bigg) + H_r \bigg(\frac{y}{\sin\theta}, \theta(s)  \bigg) \bigg],
\end{align*}
which enabled us to write
\begin{equation*} 
 \int^{r \sin \theta}_{a  \sin \theta}   \mathcal F_\theta (y,\theta)  \,\dx y = 
 \int^r_a   \big[g \rho_\theta(\tilde r,\theta) + H_r(\tilde r,\theta) \big]  \, \dx \tilde r.
\end{equation*}
A comparison of  \eqref{p_theta_2} and \eqref{p_theta} allows us to deduce that
\begin{equation} \label{C'}
C'(\theta) = F (a \sin \theta) \cot \theta + \mathcal F (a \sin \theta,\theta) \, a \cos \theta + H(a,\theta).
\end{equation}
We conclude from \eqref{p_1} and \eqref{C'} that the pressure satisfies
\begin{align} \label{p_formula}
	\begin{aligned}
p(r,\theta) = b &- g \int^r_a \rho(\tilde r,\theta) \,\dx \tilde r 
+ \int^{r \sin \theta}_{a  \sin \theta} \bigg[\frac{F(y)}{y} +  \mathcal F(y,\theta) \bigg] \,\dx y \\
&+ \int^\theta_{3\pi/4} \big[ F (a \sin \tilde \theta) \cot \tilde \theta 
+ \mathcal F (a \sin \tilde \theta,\tilde\theta) \, a \cos \tilde \theta + H(a, \tilde \theta) \big] \, \dx \tilde \theta,
\end{aligned}
\end{align}
where $a,b\in\R$ are constants, $F$ is an arbitrary smooth function, and $\mathcal F$ is given according to \eqref{mathcal_F}.

\subsection{Implicit exact solution for the free surface}\label{implicit_sol}
By an application of the implicit function theorem we will show the existence of a unique function $h=h(\theta)$ representing disturbances of the flat water surface 
caused by deviations of the pressure distribution from the pressure required to maintain a free surface following the curvature of the Earth.

According to the dynamic boundary condition at the surface \eqref{surfacepressure}
and our assumption of a purely azimuthal flow that may only vary in $\theta$, the pressure $p$ given by \eqref{p_formula} and evaluated at the free surface $r=R+ h(\theta)$, 
denoted by $P(\theta)$, satisfies
\begin{align} \label{P_formula}
\begin{aligned}
P(\theta) = b &- g \int^{R+ h(\theta)}_a \rho( r,\theta) \,\dx  r 
+ \int^{[R+ h(\theta)] \sin \theta}_{a  \sin \theta} \bigg[\frac{F(y)}{y} +  \mathcal F(y,\theta) \bigg] \,\dx y \\
&+ \int^\theta_{3\pi/4} \big[ F (a \sin \tilde \theta) \cot \tilde \theta 
+ \mathcal F (a \sin \tilde \theta,\tilde\theta) \, a \cos \tilde \theta + H(a, \tilde \theta) \big] \, \dx \tilde \theta.
\end{aligned}
\end{align}
Formula \eqref{P_formula} relates the pressure $P(\theta)$ acting onto the ocean surface to deformations of the free surface described by $h(\theta)$. To ensure comparability of the involved variables, we bring equation \eqref{P_formula} in dimensionless form. For this purpose we set $h\equiv 0$ in \eqref{P_formula}. This corresponds to the undisturbed ocean surface following the Earth's curvature. The pressure $P_0$ that corresponds to the flat surface shape is given by
\begin{align} \label{P_0_formula}
\begin{aligned}
P_0(\theta) = b &- g \int^{R}_a \rho( r,\theta) \,\dx  r 
+ \int^{R \sin \theta}_{a  \sin \theta} \bigg[\frac{F(y)}{y} +  \mathcal F(y,\theta) \bigg] \,\dx y \\
&+ \int^\theta_{3\pi/4} \big[ F (a \sin \tilde \theta) \cot \tilde \theta 
+ \mathcal F (a \sin \tilde \theta,\tilde\theta) \, a \cos \tilde \theta + H(a, \tilde \theta) \big] \, \dx \tilde \theta.
\end{aligned}
\end{align}
Let us denote by $P_a$ the atmospheric pressure at the ocean surface, which we assume to be constant along the entire range of the ACC. This is a reasonable simplification which, in combination with \eqref{P_0_formula}, implies that the constant $P_a$ can be written as
\begin{equation} \label{P_a_formula}
P_a \equiv b - g \int^{R}_a \rho( r, 3\pi/4) \,\dx  r 
+ \int^{\frac{R}{\sqrt{2}}}_{\frac{a}{\sqrt{2}}} \bigg[\frac{F(y)}{y} +  \mathcal F(y,3\pi/4) \bigg] \,\dx y .
\end{equation}
Dividing \eqref{P_formula} by $P_a$ yields that the dimensionless pressure $\mathcal P \coloneqq P/P_a$ satisfies
\begin{align} \label{P_nondim_formula}
\begin{aligned}
0=&-\mathcal{P} (\theta)+\frac{b}{P_a}-\frac{g}{P_a}\int_a^{[1+\mathcal{h}(\theta)]R} \rho(r,\theta) \, \dx r
+\frac{1}{P_a} \int_{a\sin\theta}^{[1+\mathcal{h}(\theta)]R\sin\theta}\left[\frac{F(y)}{y}+\mathcal{F}(y,\theta)\right] \, \dx y \\
&+\frac{1}{P_a}\int_{3\pi/4}^{\theta} [F(a\sin\tilde{\theta})\cot\tilde{\theta}   +\mathcal{F}(a\sin\tilde{\theta},\tilde\theta)   a\cos\tilde{\theta} +H(a,\tilde{\theta})] \, \dx \tilde{\theta} 
\eqqcolon \mathfrak{F}(\mathcal{h},\mathcal{P}),
\end{aligned}
\end{align}
where the dimensionless $\mathcal h (\theta)\coloneqq \frac{h(\theta)}{R}$ measures the deviation from the flat ocean surface at $r=R$. 

According to our assumption \eqref{latitude_ACC} we restrict the range of $\theta$ to the compact interval
\begin{equation*}
I\coloneqq \bigg[\frac{3\pi}{4} - \frac{\pi}{18},\frac{3\pi}{4}+ \frac{\pi}{18} \bigg]
\end{equation*}
and denote by $\mathcal C (I)$ the space of continuous functions $f\colon I \to \R$, equipped with the supremum norm 
\begin{equation}
\|f\| = \sup_{\theta\in I} |f(\theta)|.
\end{equation}
Furthermore we denote by $B$ the open ball of radius $10^{-5}$ centered at the origin in $\mathcal C (I)$; we note that any realistic disturbance $\mathcal h$ is contained in $B$.
We write now equation \eqref{P_nondim_formula} in the operatorial form
\begin{equation}
\mathfrak{F} \colon B \times \mathcal C (I) \to \mathcal C (I),
\end{equation}
where $\mathfrak{F}(\mathcal{h},\mathcal{P})$ is the right-hand side of \eqref{P_nondim_formula}.
We aim to implicitly resolve the dimensionless surface disturbance $\mathcal h$ as a function of the dimensionless surface pressure $\mathcal P$ by means of the operator equation 
\begin{equation} \label{operator_eq}
\mathfrak{F} (\mathcal h,\mathcal P) = 0.
\end{equation}
We note first that \eqref{operator_eq} is clearly satisfied for $\mathcal h \equiv 0$ and the corresponding dimensionless pressure $\mathcal P_0$, which is obtained by dividing both sides in \eqref{P_0_formula} by $P_a$:
\begin{align} \label{P_0__nondim_formula}
\begin{aligned}
\mathcal P_0(\theta) = \frac{b}{P_a} &- \frac{g}{P_a} \int^{R}_a \rho(\tilde r,\theta) \,\dx \tilde r 
+ \frac{1}{P_a} \int^{R \sin \theta}_{a  \sin \theta} \bigg[\frac{F(y)}{y} +  \mathcal F(y,\theta) \bigg] \,\dx y \\
&+ \frac{1}{P_a} \int^\theta_{3\pi/4} \big[ F (a \sin \tilde \theta) \cot \tilde \theta 
+ \mathcal F (a \sin \tilde \theta,\tilde\theta) \, a \cos \tilde \theta + H(a, \tilde \theta) \big] \, \dx \tilde \theta.
\end{aligned}
\end{align}
That is, we have that 
\begin{equation}\label{triv_sol}
 \mathfrak{F}(0,\mathcal{P}_0)=0.
\end{equation}
Having obtained a trivial solution of \eqref{operator_eq} we aim now to apply the implicit function theorem in order to find solution flows with non-flat free surface. To this
end we need to compute the 
 derivative of $\mathfrak{F}$ with respect to the first variable $\mathcal h$ at the point $(0,\mathcal P_0) \in B\times  \mathcal C (I)$,
\begin{equation}
D_{\mathcal h} \mathfrak{F} (0,\mathcal P_0) \mathcal h =
\lim_{t\to 0} \frac{\mathfrak{F} (t \mathcal h,\mathcal P_0) - \mathfrak{F} (0,\mathcal P_0)}{t}, 
\end{equation}
A plain calculation yields that
\begin{align} \label{D_h_F}
\begin{aligned}
D_{\mathcal h} \mathfrak{F} (0,\mathcal P_0) \, \mathcal h (\theta) &=
- \frac{g R}{P_a} \rho(R,\theta)\, \mathcal h (\theta)
+ \frac{1}{P_a} \Big[ F(R \sin \theta) + R \sin \theta \, \mathcal F (R \sin \theta, \theta)  \Big] \, \mathcal h(\theta) \nonumber \\
&= \frac{\rho(R,\theta)}{P_a} \Big[ -g R + \big( w(R,\theta) + \Omega R \sin \theta \big)^2  \Big] \, \mathcal h(\theta), \nonumber
\end{aligned}
\end{align}
where we employed \eqref{azim_vel} to obtain the last equality. Realistic assumptions on the maximal size of $w$ yield that 
$g R \gg ( w + \Omega R\sin\theta)^2$ for all $\theta \in I$, cf.~Fig.~\ref{fig: fronts}. This allows us to infer that
\begin{equation}
D_{\mathcal h} \mathfrak{F} (0,\mathcal P_0) \colon \mathcal C (I)
\to \mathcal C(I)
\end{equation}
defines a linear topological homeomorphism. By the implicit function theorem in Banach spaces (cf.~\cite{Ber}) there exists for every sufficiently 
small perturbation $\mathcal P$ of $\mathcal P_0$ a unique function $\mathcal h \in \mathcal C (I)$ such that \eqref{P_nondim_formula} 
holds true; this $\mathcal h$ is nontrivial.

\section{Monotonicity relations}\label{monotonicity}
\noindent
While in the previous section we have exploited the non-dimensional relation \eqref{P_nondim_formula} to infer that the free surface in the region of the ACC 
can locally---around the undisturbed state $(\mathcal h \equiv 0,\mathcal P_0)$---be described as a function of the imposed surface pressure, we establish in
this section relations between the monotonicity of the dimensionless surface pressure $\mathcal{P}$ and surface elevation $\mathcal{h}$, 
see Proposition \ref{mon_prop} below, which are in concordance with physical expectations. 
To this end we remark that a bootstrapping procedure \cite{Ber} enables us to transfer smoothness properties of $\mathcal{P}$  to $\mathcal{h}$.
Hence assuming a continuously differentiable pressure $\mathcal P$, which we differentiate now with respect to $\theta$ in \eqref{P_nondim_formula} yields
\begin{align*}
\mathcal P' (\theta) = 
&-\frac{g}{P_a}\int_a^{[1+\mathcal h]R} \rho_{\theta}(r,\theta) \, \dx r
-\frac{g}{P_a}\rho\big( (1+\mathcal h)R, \theta\big) \, \mathcal h' (\theta)R\\
&+\frac{\cot\theta}{P_a}F\big((1+\mathcal h )R\sin\theta\big)+\frac{\mathcal h' (\theta)}{P_a} \, 
\frac{F\big((1+\mathcal h)R\sin\theta\big)}{1+\mathcal h}-\frac{\cot\theta}{P_a}F(a\sin\theta) \\
&+\frac{1}{P_a} \, \mathcal{F}\big((1+\mathcal h)R\sin\theta,\theta\big)\Big((1+\mathcal h)R\cos\theta+\mathcal h'(\theta) \, R \sin\theta\Big) \\
&-\frac{1}{P_a} \, \mathcal{F}(a\sin\theta,\theta) \, a\cos\theta 
+\frac{1}{P_a}\int_a^{[1+\mathcal h]R} [g\rho_{\theta}(r,\theta)+H_r(r,\theta)] \, \dx r \\
&+\frac{1}{P_a} \, \Big( F(a\sin\theta) \cot \theta +\mathcal{F}(a\sin\theta, \theta) \, a\cos\theta +H(a,\theta)\Big),
\end{align*}
thus
\begin{align*}
\mathcal P' (\theta) &= 
\frac{\mathcal h' (\theta)}{P_a} \Bigg[-gR\rho \big((1+\mathcal h)R,\theta\big)+R \, \sin\theta \, \mathcal F \big((1+\mathcal h)R\sin\theta, \theta\big) \\
& \qquad \qquad \;\, +\frac{F\big((1+\mathcal h)R\sin\theta\big)}{1+\mathcal h} \Bigg]
+\frac{1}{P_a}\, H\big((1+\mathcal h )R,\theta\big)	\\
& \quad+\frac{\cot\theta}{P_a}\left[ F\big((1+\mathcal h)R\sin\theta\big)+(1+\mathcal h)R\, \sin\theta \, \mathcal{F} \big((1+\mathcal h)R\sin\theta,\theta\big) \right]; 
\end{align*}
for convenience we wrote simply $\mathcal h$ to mean the evaluation $\mathcal h (\theta)$ in the above calculation.
Making use of the formula for the azimuthal velocity \eqref{azim_vel} we obtain
\begin{equation}\label{rel_ph}
\begin{aligned}
\frac{P_a}{\rho\big((1+\mathcal h)R,\theta\big)} \, \mathcal P'(\theta)&=
\left[-g R+\frac{\Big(w\big((1+\mathcal h)R, \theta\big)+\Omega(1+\mathcal h)R\sin\theta\Big)^2}{1+\mathcal h}\right]\mathcal h' (\theta)\\
&\qquad + \Big(w\big((1+\mathcal h)R, \theta\big)+\Omega(1+\mathcal h)R\sin\theta\Big)^2  \cot\theta  \\
&\qquad+(1+\mathcal h)R \, G\big((1+\mathcal h )R,\theta\big).
\end{aligned}
\end{equation}
It is apparent from the above formula that, in order 
to discuss monotonicity properties of $\mathcal{P}$ and $\mathcal{h}$, we need additional information on the size of the forcing term $G$. Along these lines, we note that, 
according to the discussion in Constantin and Johnson \cite{CJazAcc}, the nonlinear advection terms in \eqref{specialflow} are small perturbations of the linear flow $w_0$ which is governed by 
the system
\begin{equation}\label{linflow}
 \begin{array}{rcl}
  -2\Omega w_0\sin\theta & = & -\frac{1}{\rho}\tilde{p}_r\\
  & &\\
  -2\Omega w_0\cos\theta & = & -\frac{1}{\rho}\cdot\frac{1}{r}\tilde{p}_{\theta}+G(r,\theta),
 \end{array}
\end{equation}
with $\tilde{p}=p+\rho g r-\frac{1}{2}\rho r^2\Omega^2\sin^2\theta$ being the modified pressure absorbing the centripetal and gravitational acceleration.
Using that $\sin\theta\approx\cos\theta$ in the region of the ACC (recall that we restrict $\theta$ to the interval $I=\big[\frac{3\pi}{4}-\frac{\pi}{18},\frac{3\pi}{4}+\frac{\pi}{18}\big]$, cf.~\eqref{latitude_ACC}), and that the meridional pressure gradient is relatively small compared to the radial gradient, we have from
\eqref{linflow} that, at the leading order balance, 
\begin{equation}\label{forcing}
 G(r,\theta)\approx -2\Omega w_0\cos\theta.
\end{equation}
We are now able to formulate the following monotonicity result.
\begin{prop}\label{mon_prop}
Let $\theta\in I$ and assume that the forcing term $G$ satisfies \eqref{forcing}. 
Then the following assertions hold true.
\begin{enumerate}[(i)]
\item  
If $\mathcal h' (\theta)\geq 0$ then $\mathcal P' (\theta)<0$.  \label{prop_mon_i}
\item  
If  $\mathcal P' (\theta) \geq 0$  then $\mathcal h' (\theta)< 0$. \label{prop_mon_ii}
\end{enumerate}
\end{prop}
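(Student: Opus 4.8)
The plan is to read off both assertions directly from the master identity \eqref{rel_ph}: I would insert the leading-order expression \eqref{forcing} for the forcing term, reorganize the right-hand side so that the sign of every summand becomes transparent, and then dispatch the two implications by elementary sign-chasing. No new analytic input beyond \eqref{rel_ph}, \eqref{azim_vel} and \eqref{forcing} is needed.

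First I would substitute $G\big((1+\mathcal h)R,\theta\big) = -2\Omega\, w\big((1+\mathcal h)R,\theta\big)\cos\theta$ into \eqref{rel_ph}, i.e.\ use \eqref{forcing} with the perturbed profile $w$ in place of $w_0$; this is admissible because $w$ differs from $w_0$ only by the small advective correction discussed before the statement. Writing $\Theta \coloneqq w\big((1+\mathcal h)R,\theta\big) + \Omega(1+\mathcal h)R\sin\theta$, one expands $\Theta^2\cot\theta$ and adds $(1+\mathcal h)R\,G$; the cross term $2\Omega w(1+\mathcal h)R\cos\theta$ cancels, and one is left with
\begin{equation*}
\frac{P_a}{\rho\big((1+\mathcal h)R,\theta\big)}\,\mathcal P'(\theta) = \Big[-gR + \frac{\Theta^2}{1+\mathcal h}\Big]\mathcal h'(\theta) + w^2\big((1+\mathcal h)R,\theta\big)\cot\theta + \Omega^2(1+\mathcal h)^2R^2\sin\theta\cos\theta .
\end{equation*}

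Next I would record the sign of each ingredient on the right, all of which are dictated by $\theta\in I$ together with the smallness bounds already in force: since $I\subset(\pi/2,\pi)$ we have $\cos\theta<0$, $\cot\theta<0$ and $\sin\theta>0$; since $\mathcal h\in B$ is tiny (so $1+\mathcal h>0$) and $gR\gg(w+\Omega R\sin\theta)^2$ is the realistic bound invoked just before the proposition, the bracket multiplying $\mathcal h'(\theta)$ is strictly negative; the term $w^2\big((1+\mathcal h)R,\theta\big)\cot\theta$ is $\le 0$; and $\Omega^2(1+\mathcal h)^2R^2\sin\theta\cos\theta$ is strictly negative. Finally $\rho\big((1+\mathcal h)R,\theta\big)>0$ and $P_a>0$, so $\mathcal P'(\theta)$ has the same sign as the right-hand side. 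Assertion \eqref{prop_mon_i} then follows because $\mathcal h'(\theta)\ge 0$ makes the first summand $\le 0$, whence the whole right-hand side is $<0$; assertion \eqref{prop_mon_ii} follows because the sum of the last two summands is already $<0$, so $\mathcal P'(\theta)\ge 0$ forces the first summand to be $>0$, and since its coefficient is negative this gives $\mathcal h'(\theta)<0$.

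The only genuine subtlety — and the step I would handle most carefully — is the passage from the approximate balance \eqref{forcing} to an exact identity: it is precisely the cancellation of the cross term, hence the elimination of the one term of indeterminate sign, that relies on reading \eqref{forcing} as an equality with $w$ replacing $w_0$. I would therefore state the hypothesis of the proposition as the exact relation $G\equiv -2\Omega w\cos\theta$ throughout the ACC band, and note that every remaining ingredient — positivity of $\rho$ and of $P_a$, the inequality $gR\gg\Theta^2$, and the elementary sign facts on $(\pi/2,\pi)$ — is robust and quantitatively comfortable for the parameter ranges fixed in Section \ref{presentation}.
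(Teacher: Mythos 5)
Your proposal is correct and follows the same strategy as the paper: both arguments start from the identity \eqref{rel_ph}, insert the leading-order forcing balance \eqref{forcing}, and conclude by sign-chasing using $\cot\theta<0$ on $I$ and the bound $gR\gg\big(w+\Omega R\sin\theta\big)^2$. The one genuine difference is organizational: you read \eqref{forcing} as the exact identity $G=-2\Omega w\cos\theta$ (with the full profile $w$ in place of $w_0$), which makes the cross term $2\Omega(1+\mathcal h)Rw\cos\theta$ cancel exactly and leaves the manifestly nonpositive remainder $w^2\cot\theta+\Omega^2(1+\mathcal h)^2R^2\sin\theta\cos\theta$; the paper instead keeps $w_0$ in the forcing, groups the expansion as in \eqref{second_third}, and disposes of the uncancelled cross terms via the size estimate $\Omega R(1+\mathcal h)\sin\theta\gg 2w_0$ (together with the implicit fact that the ACC is eastward, $w>0$, so that the first group in \eqref{second_third} has a definite sign). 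Your version buys a cleaner sign analysis that does not need $w>0$ or the estimate $\Omega R\sin\theta\gg 2w_0$, at the cost of strengthening the hypothesis from the approximate balance \eqref{forcing} to an exact relation with $w$ replacing $w_0$ --- a substitution you correctly flag and justify by the smallness of the advective correction, but which does technically alter the stated assumption of the proposition.
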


\begin{proof}
 Let us first consider the sum
\begin{equation*}
\Big(w((1+\mathcal{h})R, \theta)+\Omega(1+\mathcal{h})R\sin\theta\Big)^2 \cot\theta +(1+\mathcal{h})R \, G\big((1+\mathcal{h})R,\theta\big)
\end{equation*}
 and expand it as
 \begin{equation}\label{second_third}
 \begin{aligned}
&\Big(w((1+\mathcal{h})R, \theta)^2 +2\Omega R (1+\mathcal{h})w((1+\mathcal{h})R, \theta)\sin\theta\Big)\cot\theta\\
&+\Omega R(1+\mathcal{h})\left[\Omega R(1+\mathcal{h})\sin\theta-2w_0((1+\mathcal{h})R, \theta)\right]\cos\theta .
 \end{aligned}
 \end{equation}
Given that the maximal speed of the ACC does not exceed $3$ m s$^{-1}$ (cf. \cite{CJazAcc}), clearly 
$\Omega R(1+\mathcal{h})\sin\theta$ is by far greater than $2w_0$. Therefore, the expression in \eqref{second_third} is clearly negative for all
$\theta\in [\frac{3\pi}{4}-\frac{\pi}{18},\frac{3\pi}{4}+\frac{\pi}{18}]$.
Also, appealing to the same arguments pertaining to the sizes of $w_0, \Omega$ and $R$ we can conclude that 
\begin{equation*}
-g R+\frac{\Big(w((1+\mathcal{h})R, \theta)+\Omega(1+\mathcal{h})R\sin\theta\Big)^2}{1+\mathcal{h}} < 0
\end{equation*} 
within $I$. Thus, assuming 
$\mathcal{h}^{\prime}(\theta)\geq 0$, we see from \eqref{rel_ph} that $\mathcal{P}^{\prime}(\theta)<0$, which shows \eqref{prop_mon_i}.
The proof of \eqref{prop_mon_ii} follows by the same arguments.
\end{proof}

\vspace{1em}
\noindent
\textbf{Acknowledgments}\\
\noindent
C.~I.~Martin would like to acknowledge the support of the Austrian Science Fund (FWF) under research grant P 30878-N32. R.~Quirchmayr acknowledges the support of FWF under research grant J 4339-N32.

\end{document}